\newtheorem{theorem}{Theorem}
\newtheorem{lemma}{Lemma}
\newtheorem{remark}{\indent \bf Remark}
\newcommand{\Hilbert}{\mathbb}
\newcommand{\DHy}{D_{\textnormal{H}}}
\newcommand*{\trace}{\mathrm{tr}}
\newcommand*{\set}[1]{\mathcal{#1}}
\newcommand*{\ket}{\rangle}
\newcommand*{\bra}{\langle}
\begin{document}

\title{One-Shot Classical-Quantum Capacity and Hypothesis Testing}

\author{Ligong Wang}
\email{wlg@mit.edu}
\affiliation{Research Laboratory of Electronics,
    MIT, Cambridge, MA, USA}

\author{Renato Renner}
\email{renner@phys.ethz.ch}
\affiliation{Institute for Theoretical Physics, ETH Zurich, Switzerland}
% \thanks{L.W. is with the Signal and Information
%     Processing Laboratory, ETH Zurich. E-mail: \texttt{wang@isi.ee.ethz.ch}. }
% \and Renato~Renner\thanks{R.R. is with the Institute for Theoretical
%   Physics, ETH Zurich. E-mail: \texttt{renner@phys.ethz.ch}.} 

%\date{}

\begin{abstract}
  The one-shot classical capacity of a quantum channel quantifies the
  amount of classical information that can be transmitted through a
  single use of the channel such that the error probability is below a
  certain threshold. In this work, we show that this capacity is well
  approximated by a relative-entropy-type measure defined via
  hypothesis testing.  Combined with a quantum version of Stein's
  Lemma, our results give a conceptually simple proof of the
  well-known Holevo-Schumacher-Westmoreland Theorem for the capacity
  of memoryless channels. More generally, we obtain tight capacity
  formulas for arbitrary (not necessarily memoryless) channels.
\end{abstract}

\pacs{89.70.-a,89.70.Kn,89.70.Cf}

\maketitle

In Information Theory, a \emph{channel} models a physical device that
takes an input and generates an output. One may, for instance, think
of a communication channel (such as an optical fiber) that connects a
sender (who provides the input) with a receiver (who obtains an
output, which may deviate from the input). Another example
is a memory device, such as a hard drive, where the input consists of
the data written into the device, and where the output is the
(generally noisy) data that is retrieved from the device at a later
point in time.

A central question studied in Information Theory is whether, and how,
a channel can be used to transmit data reliably in spite of the
channel noise. This is usually achieved by \emph{coding},
where an \emph{encoder} prepares the channel input by adding
redundancy to the data to be transmitted, and where a \emph{decoder}
reconstructs the data from the noisy channel output.
% (see Fig.~\ref{fig_coding}).

Here we focus on the case of
\emph{classical-quantum channel coding}, where the data to be
transmitted reliably are classical. No assumptions are made about the
channel that is used to achieve this task, i.e., the inputs and
outputs may be arbitrary quantum states. However, since the
quantum-mechanical structure of the input space is irrelevant for the
encoding of classical data, it can be represented by a (classical) set
$\set{X}$. For any input $x \in \set{X}$, the channel produces an
output, specified by a density operator $\rho_x$ on a Hilbert space
$\Hilbert{B}$. For our purposes, it is therefore sufficient to
characterize a channel by a mapping $x \mapsto \rho_x$ from a set
$\set{X}$ to a set of density operators.

Classical-quantum channel coding has been studied extensively in a
scenario where the channel can be used arbitrarily many times.  The
\emph{channel coding theorem for stationary memoryless
  classical-quantum channels}, established by Holevo~\cite{holevo98}
and Schumacher and Westmoreland~\cite{schumacherwestmoreland97},
provides an explicit formula (see \eqref{eq:HolSchWes}) for the rate
at which data can be transmitted under the assumption that each use of
the channel is independent of the previous uses. More general channel
coding theorems that do not rely on this independence assumption have
been developed in later work by Hayashi and
Nagaoka~\cite{hayashinagaoka03} and by Kretschmann and
Werner~\cite{kretschmannwerner05}. These 
results are asymptotic, i.e., they refer to a limit where the number
of channel uses tends to infinity while the probability of error is
required to approach zero.

Here we consider a scenario where a given quantum channel is used only
\emph{once} and derive tight bounds on the number of classical bits
that can be transmitted with a given average error probability
$\epsilon$, in the following referred to as the
\emph{$\epsilon$-one-shot classical-quantum capacity}. This one-shot
approach provides a high level of generality, as nothing needs to be
assumed about the structure of the channel~\footnote{In particular, in
  contrast to previous work, there is no need no define channels as
  sequences of mappings.}.  (Note that any situation in which a
channel is used repeatedly can be equivalently described as one single
use of a larger channel.) In particular, our bounds on the channel
capacities imply the aforementioned Holevo-Schumacher-Westmoreland
Theorem for the capacity of memoryless channels, as well as the
generalizations by Hayashi and Nagaoka. On the other hand, our work
generalizes similar one-shot results for classical
channels~\cite{rennerwolfwullschleger06,WLGcolbeckrenner09,polyanskiypoorverdu08}. 
Despite their 
generality, the bounds as well as their proofs are
remarkably simple. We hope that our approach may therefore also be of
pedagogical value.

Our derivation is based on the idea, already exploited in previous
works (see, e.g.,
\cite{ogawanagaoka00,ogawanagaoka02,hayashinagaoka03,hayashi07}), of
relating the problem of channel coding to hypothesis testing. Here, we
use hypothesis testing directly to define a
relative-entropy-type quantity, denoted
$\DHy^\epsilon(\cdot\|\cdot)$ (see \eqref{eq:DHdef}). Our main result
asserts that the one-shot channel capacity is well 
approximated by $\DHy^\epsilon(\cdot\|\cdot)$ (Theorem~\ref{thm:main}).

We note that one-shot capacity bounds that are very similar to ours have been implicitly used in the information-spectrum approach to classical-quantum channel coding by Hayashi and Nagaoka \cite{hayashinagaoka03,hayashi06}.

The remainder of this Letter is structured as follows. We briefly
describe hypothesis testing and state a few 
properties of the quantity $\DHy^\epsilon(\cdot \| \cdot)$. We then
state and prove our main result which provides upper and lower
bounds on the $\epsilon$-one-shot classical-quantum capacity in
terms of $\DHy^\epsilon(\cdot \| \cdot)$. 
Finally, we show how the known
asymptotic bounds (for arbitrarily many channel uses) can be obtained
from Theorem~\ref{thm:main}.

%%%%%%%%%%%%%%%%%%%%%%%%%%%%%%%%%%%%%%%%%%%%%%%%%%

\emph{Hypothesis Testing and
  $\DHy^\epsilon(\cdot\|\cdot)$.|}Hypothesis testing is the task of 
distinguishing two possible states of a system, $\rho$ and $\sigma$.
A strategy for this task is specified by a Positive Operator Valued
Measure (POVM) with two elements, $Q$ and $I- Q$, corresponding to the
two possible values for the guess. The probability that the strategy
produces a \emph{correct} guess on input $\rho$ is given by $\trace[ Q
\rho ]$, and the probability that it produces a \emph{wrong} guess on
input $\sigma$ is $\trace[ Q \sigma$].  We define the
\emph{hypothesis testing relative entropy}
$\DHy^\epsilon(\rho\|\sigma)$ as
\begin{align} \label{eq:DHdef}
  \DHy^\epsilon(\rho\|\sigma) \triangleq - \log_2 \inf_{\substack{Q: 0\le
      Q\le I,\\ \trace[Q\rho]\ge 1-\epsilon}} \trace[Q\sigma].
\end{align}
% \begin{definition}[$\DHy^\epsilon(\rho\|\sigma)$]
%   The \emph{hypothesis testing relative entropy} of parameter
%   $\epsilon$ between two quantum
%   states $\rho$ and $\sigma$ is
%   \begin{equation*}
%     \DHy^\epsilon(\rho\|\sigma) \triangleq \sup_{\substack{Q: 0\le
%               Q\le I,\\ \trace{Q\rho}\ge 1-\epsilon}}
%           \left\{ -\log \trace{Q\sigma}\right \}.
%   \end{equation*}
% \end{definition}
Note that $\DHy^\epsilon(\rho\|\sigma)$ is a semidefinite program and
can therefore be evaluated efficiently. 

As its name suggests, $\DHy^\epsilon(\rho\|\sigma)$ can be understood as a
relative entropy. In particular, for $\epsilon = 0$, it is equal to
\emph{R\'enyi's relative entropy of order $0$}, $D_0(\rho\|\sigma) = -
\log_2 \trace[\rho^0 \sigma]$, where $\rho^0$ denotes the projector
onto the support of $\rho$. For $\epsilon > 0$, it corresponds to a
``smoothed'' variant of the relative R\'enyi entropy of order $0$ used
by Buscemi and Datta~\cite{buscemidatta10} for characterizing the
quantum capacity of channels~\footnote{It is also similar to a
  relative-entropy-type quantity used in~\cite{datta09}, although the
  precise relation to this quantity is not
  known.}. $\DHy^\epsilon(\rho\|\sigma)$ has the following properties,
all of which hold for all $\rho$, $\sigma$ and $\epsilon\in[0,1)$:

\begin{enumerate}
    \item \emph{Positivity:} 
      \begin{align*}
        \DHy^\epsilon(\rho\|\sigma)\ge 0,
      \end{align*}
      with equality if $\rho=\sigma$ and $\epsilon=0$.
    \item \emph{Data Processing Inequality (DPI):} for any Completely
      Positive Map 
      (CPM) $\mathcal{E}$, 
      \begin{align*}
        \DHy^\epsilon(\rho\|\sigma) \ge \DHy^\epsilon
        (\mathcal{E}(\rho)\|\mathcal{E}(\sigma)).
      \end{align*}
    \item Let $D(\cdot\|\cdot)$ denote the usual quantum relative
      entropy, then
      \begin{equation}\label{eq:upper_by_D}
        \DHy^\epsilon(\rho\|\sigma) \le \bigl(D(\rho\|\sigma) +
        H_\textnormal{b}(\epsilon)\bigr)/(1-\epsilon),
      \end{equation}
      where $H_\textnormal{b}(\cdot)$ is the binary entropy function.
 \end{enumerate}

 Positivity follows immediately from the definition. 

 To prove the DPI,
 consider any POVM to distinguish $\mathcal{E}(\rho)$ from
 $\mathcal{E}(\sigma)$. We can then construct a new POVM to
 distinguish $\rho$ from $\sigma$ by preceding the given POVM with the
 CPM $\mathcal{E}$. This new POVM clearly gives the same error
 probabilities (in distinguishing $\rho$ and $\sigma$) as the original
 POVM (in distinguishing $\mathcal{E}(\rho)$ and
 $\mathcal{E}(\sigma)$). The DPI then follows because an optimization
 over all possible strategies for distinguishing $\rho$ and $\sigma$
 can only decrease the failure probability.

 To prove \eqref{eq:upper_by_D}, first see that it holds when
 $D(\rho\|\sigma)$ is replaced by $D(P_\rho\|P_\sigma)$, where
 $P_\rho$ is the distribution of the outcomes of the optimal POVM
 performed on $\rho$, namely, it is $(1-\epsilon, \epsilon)$, and
 similarly for $P_\sigma$ which is $(2^{-\DHy^\epsilon(\rho\|\sigma)},
 1-2^{-\DHy^\epsilon(\rho\|\sigma)})$. This can be shown by directly
 computing $D(P_\rho\|P_\sigma)$. Then \eqref{eq:upper_by_D} follows
 because $D(\cdot\|\cdot)$ satisfies the DPI so $D(\rho\|\sigma) \ge
 D(P_\rho\|P_\sigma)$.

 A further connection between $\DHy^\epsilon(\cdot\|\cdot)$ and
 $D(\cdot\|\cdot)$ is the Quantum Stein's Lemma
 \cite{hiaipetz91,ogawanagaoka00}, which we restate as follows.

 \begin{lemma}[Quantum Stein's Lemma] \label{lem:stein} For any two
   states $\rho$ and $\sigma$ on a Hilbert space and for any
   $\epsilon\in(0,1)$, 
  \begin{align*}
    \lim_{n\to\infty} \frac{1}{n} \DHy^\epsilon (\rho^{\otimes n}\|
    \sigma^{\otimes n}) = D(\rho\|\sigma).
  \end{align*}
\end{lemma}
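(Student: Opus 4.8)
The plan is to prove separately the two one-sided estimates
  $\liminf_{n\to\infty}\frac1n\DHy^\epsilon(\rho^{\otimes n}\|\sigma^{\otimes n})\ge D(\rho\|\sigma)$
  (the ``direct part'') and
  $\limsup_{n\to\infty}\frac1n\DHy^\epsilon(\rho^{\otimes n}\|\sigma^{\otimes n})\le D(\rho\|\sigma)$
  (the ``strong converse''), in a finite-dimensional Hilbert space. If $\mathrm{supp}\,\rho\not\subseteq\mathrm{supp}\,\sigma$ then $D(\rho\|\sigma)=+\infty$, and the projector onto $(\mathrm{supp}\,\sigma^{\otimes n})^{\perp}$ has $\sigma$-error $0$ and $\rho$-error tending to $0$, so $\DHy^\epsilon(\rho^{\otimes n}\|\sigma^{\otimes n})=+\infty$ for all large $n$; hence we may assume $\mathrm{supp}\,\rho\subseteq\mathrm{supp}\,\sigma$. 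It is worth noting at once why the converse is the delicate half: for any admissible $Q$, the data-processing inequality for the ordinary relative entropy under the binary measurement $\{Q,I-Q\}$, together with a standard lower bound on the binary relative entropy, gives $-\log\trace{Q\sigma^{\otimes n}}\le\bigl(nD(\rho\|\sigma)+\log 2\bigr)/(1-\epsilon)$ and hence $\limsup_n\frac1n\DHy^\epsilon(\rho^{\otimes n}\|\sigma^{\otimes n})\le D(\rho\|\sigma)/(1-\epsilon)$; removing the spurious factor $1/(1-\epsilon)$ is the real point.

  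For the direct part I would use Hiai and Petz's pinching argument. Let $\mathcal{P}_n$ denote the pinching onto the spectral projections of $\sigma^{\otimes n}$ and set $\hat\rho_n=\mathcal{P}_n(\rho^{\otimes n})$. Then $\hat\rho_n$ commutes with $\sigma^{\otimes n}$, so the pair behaves classically. The pinching inequality $\rho^{\otimes n}\le v_n\,\hat\rho_n$, with $v_n$ the number of distinct eigenvalues of $\sigma^{\otimes n}$ (which grows only polynomially in $n$), combined with the block-diagonality of $\log\sigma^{\otimes n}$ and $\log\hat\rho_n$, gives $D(\hat\rho_n\|\sigma^{\otimes n})\ge nD(\rho\|\sigma)-\log v_n$; together with $D(\hat\rho_n\|\sigma^{\otimes n})\le D(\rho^{\otimes n}\|\sigma^{\otimes n})=nD(\rho\|\sigma)$ from data processing this yields $\frac1n D(\hat\rho_n\|\sigma^{\otimes n})\to D(\rho\|\sigma)$. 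Now let $Q_n$ be the spectral projection of $\hat\rho_n-e^{n(D(\rho\|\sigma)-\delta)}\sigma^{\otimes n}$ onto its nonnegative part; a concentration (``quantum AEP'') argument for the pinched, non-product state $\hat\rho_n$ shows $\trace{Q_n\hat\rho_n}\to1$ while $\trace{Q_n\sigma^{\otimes n}}\le e^{-n(D(\rho\|\sigma)-\delta)}$. Since $Q_n$ is block-diagonal in the eigenbasis of $\sigma^{\otimes n}$ and pinching is self-adjoint, $\trace{Q_n\rho^{\otimes n}}=\trace{Q_n\hat\rho_n}\to1$, so $Q_n$ is admissible for all large $n$, whence $\frac1n\DHy^\epsilon(\rho^{\otimes n}\|\sigma^{\otimes n})\ge D(\rho\|\sigma)-\delta$; let $\delta\downarrow0$.

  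For the strong converse I would use the R\'enyi relative entropies $D_\alpha(\rho\|\sigma)=\frac{1}{\alpha-1}\log\trace{\rho^{\alpha}\sigma^{1-\alpha}}$ with $\alpha>1$, via the H\"older-type operator inequality of Ogawa and Nagaoka \cite{ogawanagaoka00}: for every $0\le Q\le I$ one has an estimate of the form $\trace{Q\rho}\le\trace{Q\sigma}^{1-1/\alpha}\,\trace{\rho^{\alpha}\sigma^{1-\alpha}}^{1/\alpha}$. Applying this to $\rho^{\otimes n},\sigma^{\otimes n}$ and any admissible $Q$, and using the additivity $\trace{(\rho^{\otimes n})^{\alpha}(\sigma^{\otimes n})^{1-\alpha}}=\bigl(\trace{\rho^{\alpha}\sigma^{1-\alpha}}\bigr)^{n}$, a rearrangement and then taking the supremum over admissible $Q$ gives
  \begin{equation*}
    \DHy^\epsilon(\rho^{\otimes n}\|\sigma^{\otimes n})\le nD_\alpha(\rho\|\sigma)+c_{\alpha,\epsilon},
  \end{equation*}
  with $c_{\alpha,\epsilon}$ independent of $n$ (indeed $c_{\alpha,\epsilon}=\frac{\alpha}{\alpha-1}\log\frac{1}{1-\epsilon}$). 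Dividing by $n$ and letting $n\to\infty$ yields $\limsup_n\frac1n\DHy^\epsilon(\rho^{\otimes n}\|\sigma^{\otimes n})\le D_\alpha(\rho\|\sigma)$ for every $\alpha>1$; since $\alpha\mapsto D_\alpha(\rho\|\sigma)$ is nondecreasing and $\lim_{\alpha\downarrow1}D_\alpha(\rho\|\sigma)=D(\rho\|\sigma)$, taking the infimum over $\alpha>1$ completes the converse, and the lemma follows.

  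The main obstacle is the strong converse, and within it the H\"older-type inequality of Ogawa--Nagaoka: its proof must accommodate the non-commutativity of $\rho$ and $\sigma$, so the naive factorization of $\trace{Q\rho}$ fails and one needs a genuinely operator-theoretic estimate --- and this is exactly what upgrades the easy bound $D(\rho\|\sigma)/(1-\epsilon)$ to the tight value $D(\rho\|\sigma)$. By comparison the direct part is routine once pinching has rendered the pair commuting, apart from the quantum AEP estimate for the non-product state $\hat\rho_n$, which is standard though not entirely trivial.
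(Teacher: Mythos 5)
The paper does not actually prove this lemma --- it imports it verbatim from the cited reference \cite{ogawanagaoka00} --- and your sketch correctly reproduces the standard proof from that literature: the Hiai--Petz pinching argument for the direct part and the Ogawa--Nagaoka H\"older-type operator inequality, giving $\DHy^\epsilon(\rho^{\otimes n}\|\sigma^{\otimes n})\le nD_\alpha(\rho\|\sigma)+\tfrac{\alpha}{\alpha-1}\log\tfrac{1}{1-\epsilon}$ for $\alpha>1$, for the strong converse; the reduction of the support-mismatch case and the observation that the naive weak converse only yields $D(\rho\|\sigma)/(1-\epsilon)$ are both accurate. The one step you assert rather than prove is the concentration estimate $\trace{Q_n\hat\rho_n}\to 1$ for the commuting but non-product pair $(\hat\rho_n,\sigma^{\otimes n})$; this is genuinely the technical heart of the direct part, but it is standard --- either via a Chernoff-type bound using $\hat\rho_n^{-s}\le v_n^{s}(\rho^{\otimes n})^{-s}$, or more simply via the Hiai--Petz blocking trick (pinch on blocks of fixed length $m$, so that the test is against the i.i.d. state $\hat\rho_m^{\otimes k}$ and the classical Stein lemma applies, then let $m\to\infty$) --- so I see no genuine gap, only a deliberately compressed step that you correctly flag.
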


%%%%%%%%%%%%%%%%%%%%%%%%%%%%%%%%%%%%%%%%%%%%%%%%%%

\emph{Statement and Proof of the Main Result.|}Before stating our main
result, we introduce some general
terminology.  % In all what follows, we are assuming that we are given a
% fixed channel $x \mapsto \rho_x$, where $\rho_x$ is a density operator
% on $\Hilbert{B}$. 
The encoder is specified by a list of inputs, $\{x_i\}$,
$i\in\{1,\ldots,m\}$, called a \emph{codebook} of size $m$. The
decoder applies a corresponding \emph{decoding POVM}, which acts on
$\Hilbert{B}$ and has $m$ elements. A decoding error occurs if the
output of the decoding POVM is not equal to the index $i$ of the input
$x_i$ fed into the channel. An \emph{$(m,\epsilon)$-code} consists of
a codebook of size $m$ and a corresponding decoding POVM such that,
when the message is chosen uniformly, the average probability of a
decoding error is at most $\epsilon$~\footnote{It is well known that,
  in single-user scenarios, the 
  (asymptotic) capacity does not depend on whether the average (over
  uniformly chosen messages) or the maximum probability of error is
  considered. In the one-shot case, one can construct a code that has
  maximum probability of error not larger than $2\epsilon$ from a code
  with average probability of error $\epsilon$, thereby sacrificing
  one bit.}.

The main result of this Letter is the following theorem.
\begin{theorem} \label{thm:main} The $\epsilon$-one-shot
  classical-quantum capacity of a channel $x \mapsto \rho_x$, i.e.,
  the largest number $R$ for which a $(2^R,\epsilon)$-code exists,
  satisfies 
\begin{multline} \label{eq:mainapprox}
  \sup_{P_X} \DHy^\epsilon(\pi^{\Hilbert{A} \Hilbert{B}} \|
  \pi^{\Hilbert{A}} \otimes \pi^{\Hilbert{B}})  \geq R \\ \geq \sup_{P_X} \DHy^{\epsilon'}(\pi^{\Hilbert{A} \Hilbert{B}} \|
  \pi^{\Hilbert{A}} \otimes \pi^{\Hilbert{B}})  - \log_2
  \frac{4\epsilon}{(\epsilon-\epsilon')^2} \  
\end{multline}
for any $\epsilon'\in (0,\epsilon)$, 
where $\pi^{\Hilbert{AB}}$ is the joint state of the input and output
for an input chosen according to the distribution $P_X$, i.e.,
\begin{align*}
  \pi^{\Hilbert{AB}}  \triangleq  \sum_{x\in\set{X}} P_X(x)
  |x\ket\bra x|^\Hilbert{A}\otimes\rho_x^\Hilbert{B} \ ,
\end{align*}
for any representation of the inputs $x$ in terms of orthonormal
vectors $|x \ket^\Hilbert{A}$ on a Hilbert space $\Hilbert{A}$, and
where $\pi^{\Hilbert{A}}$ and $\pi^{\Hilbert{B}}$ are the
corresponding marginals.
\end{theorem}

\begin{remark}
  In an earlier version of this Letter (which appeared in \emph{Physical Review Letters}), the right-hand side (RHS) of \eqref{eq:mainapprox} is given by its special case where $\epsilon'=\epsilon/2$:
\begin{equation}\label{eq:alternative}
	 \sup_{P_X} \DHy^{\epsilon/2} (\pi^{\Hilbert{A} \Hilbert{B}} \|
  \pi^{\Hilbert{A}} \otimes \pi^{\Hilbert{B}}) - \log_2\frac{1}{\epsilon} - 4.
\end{equation}
For practical scenarios where $\epsilon$ is close to zero, the difference between \eqref{eq:alternative} and the RHS of \eqref{eq:mainapprox} is usually small. However, allowing an arbitrary $\epsilon'$ we can use \eqref{eq:mainapprox} to derive the \emph{$\epsilon$-capacity} of a channel, while fixing $\epsilon'=\epsilon/2$ we cannot.
\end{remark}
%Here the error probability $\epsilon$ refers to the probability that
%the transmission fails for an $R$-bit message chosen uniformly at
%random~.

The proof of Theorem~\ref{thm:main} is divided into two parts, one for
the first inequality 
(referred to as the \emph{converse}) and the other for the second
inequality (the \emph{achievability}). 
We start with the
converse which asserts that, if a $(2^R,\epsilon)$-code exists,
then
  \begin{align}\label{eq:upper}
    R \le \sup_{P_X}\DHy^\epsilon (\pi^{\Hilbert{AB}} \|
    \pi^{\Hilbert{A}}\otimes\pi^{\Hilbert{B}}). 
  \end{align}

  \begin{proof}[Proof of Theorem~\ref{thm:main}|Converse Part]
  By definition, it is sufficient to prove~\eqref{eq:upper} for a
  uniform distribution on the $x$'s used in the codebook, so we
  can focus on states $\pi^{\Hilbert{AB}}$ of the form
  \begin{align*}
    \pi^{\Hilbert{AB}} = 2^{-R} \sum_{i=1}^{2^R} |x_i\ket\bra x_i|\otimes
    \rho_{x_i} \ . 
%    \pi^{\Hilbert{A}}\otimes\pi^{\Hilbert{B}} & = & \left(2^{-R}
%      \sum_{i=1}^{2^R} |x_i\ket\bra x_i| \right) 
%    \otimes \left(2^{-R} \sum_{i=1}^{2^R} \rho_{x_i}\right).
  \end{align*}

  Note that the decoding POVM combined with the inverse of the
  encoding map (which is classical) can be viewed as a CPM. This CPM
  maps $\pi^\Hilbert{AB}$ to the (classical) state $P_{MM'}$ denoting
  the joint distribution of the transmitted message $M$ and the
  decoder's guess $M'$. Similarly, it maps $\pi^\Hilbert{A}\otimes \pi
  ^\Hilbert{B}$ to $P_M\otimes P_{M'}$. Hence, it follows from the DPI
  for $\DHy^\epsilon(\rho\|\sigma)$ that
  \begin{align*}
    \DHy^\epsilon (P_{MM'}\|P_M\otimes P_{M'}) \le \DHy^\epsilon
    (\pi^\Hilbert{AB}\|\pi^\Hilbert{A}\otimes\pi^\Hilbert{B}) \ .
  \end{align*}
  It thus remains to prove 
  \begin{align}\label{eq:converse1}
    R\le \DHy^\epsilon (P_{MM'}\|P_M\otimes P_{M'}) \ .
  \end{align}
  For this, we consider a (possibly suboptimal) strategy to
  distinguish between $P_{MM'}$ and $P_M\otimes P_{M'}$. The strategy
  guesses $P_{MM'}$ if $M=M'$, and guesses $P_M\otimes P_{M'}$
  otherwise. Using this distinguishing strategy, the probability of
  guessing $P_M\otimes P_{M'}$ on state $P_{MM'}$ is exactly the
  probability that $M\neq M'$ computed from $P_{MM'}$, namely, the
  average probability of a decoding error, and is thus not larger than
  $\epsilon$ by assumption. Furthermore, the probability of guessing
  $P_{MM'}$ on state $P_{M}\otimes P_{M'}$ is given by
  \begin{align*}
    % P_{M}\otimes P_{M'} \{M=M'\} = 
   \sum_{i=1}^{2^R} P_M(i)\cdot
    P_{M'}(i) 
    = 2^{-R} \sum_{i=1}^{2^R} P_{M'}(i) 
    = 2^{-R} \ .
  \end{align*}
  This implies~\eqref{eq:converse1}.% and thus concludes the proof of
  %the converse of Theorem~\ref{thm:main}.
\end{proof}

We proceed with the achievability part of Theorem~\ref{thm:main}.
We show that, for any
$\epsilon>\epsilon'>0$ and $c>0$, there exists a $(2^R,\epsilon)$-code
with
\begin{align}\label{eq:lower}
  R \ge \sup_{P_X} \DHy^{\epsilon'} (\pi^\Hilbert{AB}\|
  \pi^\Hilbert{A}\otimes \pi^{\Hilbert{B}})
  -\log_2\frac{2+c+c^{-1}}{\epsilon-(1+c)\epsilon'} \ .
\end{align}
Optimized over $c$, this bound implies
the second inequality of~\eqref{eq:mainapprox}.

The main technique we need for proving~\eqref{eq:lower} is the
following lemma by Hayashi and Nagaoka \cite[Lemma
2]{hayashinagaoka03}:

\begin{lemma}\label{lem:HayashiNagaoka}
  For any positive real $c$ and any operators $0\le S \le I$ and $T\ge
  0$, we have
  \begin{multline*}
    \lefteqn{I- (S+T)^{-1/2} S (S+T)^{-1/2}}\\
    \le  (1+c)(I-S) + (2+c+c^{-1})T.
  \end{multline*}
\end{lemma}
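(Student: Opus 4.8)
\noindent\textit{Proof idea.} This is a purely algebraic operator inequality, and I would prove it by direct linear algebra; no probabilistic input is needed.

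First I would reduce to the case in which $A:=S+T$ is invertible. On $\ker A$ both $S$ and $T$ vanish (being positive semidefinite with sum $A$), and $(S+T)^{-1/2}$ is read as a pseudo-inverse, so both sides are block-diagonal with respect to the decomposition $\ker A\oplus\operatorname{supp}A$; on the $\ker A$-block the left-hand side is $I$ while the right-hand side is $(1+c)I\ge I$, so the inequality holds trivially there. It therefore suffices to argue on $\operatorname{supp}A$, on which $A$ is invertible (equivalently, replace $A$ by $A+\delta I$ and let $\delta\downarrow0$ at the end).

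With $A$ invertible, I would use $A^{-1/2}AA^{-1/2}=I$ together with $A=S+T$ to rewrite the operator to be bounded,
\[
I-A^{-1/2}SA^{-1/2}=A^{-1/2}(A-S)A^{-1/2}=A^{-1/2}TA^{-1/2},
\]
so that the claim becomes $A^{-1/2}TA^{-1/2}\le(1+c)(I-S)+(2+c+c^{-1})T$. I would prove this by a ``completing-the-square'' argument: each of the three operators $I-S$, $T$ and $A^{-1/2}TA^{-1/2}$ has the form $Y^\dagger Y$, and one combines the trivial inequality $X^\dagger X\ge0$ for a suitably chosen, $c$-dependent $X$ assembled from $(I-S)^{1/2}$, $T^{1/2}$ and $T^{1/2}A^{-1/2}$ with the two elementary inequalities $V^\dagger W+W^\dagger V\le V^\dagger V+W^\dagger W$ and $V^\dagger W+W^\dagger V\le c\,V^\dagger V+c^{-1}W^\dagger W$ (valid for every $c>0$) to absorb the non-Hermitian cross terms. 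It is precisely these two inequalities that produce the coefficients $1+c$ and $2+c+c^{-1}$.

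The step I expect to be the actual obstacle is arranging the square so that \emph{both} hypotheses, $0\le S\le I$ and $T\ge0$, get used. They both must be: already the commuting (scalar) specialization of the inequality is sharp at $S=I$, $T=0$, so any estimate that throws away one of the two constraints cannot work. For the same reason one cannot treat $S$, $T$ and $A^{\pm1/2}$ as if they commuted --- in the commuting case the parameter $c$ is unnecessary, and the factor $2+c+c^{-1}$ is exactly the price paid for non-commutativity --- so the cross-term bookkeeping has to be carried out carefully. Once $X$ is pinned down the rest is routine algebra, the only lingering subtlety being the pseudo-inverse/support handling from the reduction step.
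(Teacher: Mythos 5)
The paper does not prove this lemma itself --- it imports it verbatim from Hayashi and Nagaoka \cite[Lemma~2]{hayashinagaoka03} --- so your attempt has to be measured against that source's argument. Your reduction is sound and matches it: restricting to $\operatorname{supp}(S+T)$ (where the inequality on the kernel block is trivial because $S=T=0$ there and the right-hand side is $(1+c)I\ge I$) and rewriting the left-hand side as $A^{-1/2}TA^{-1/2}$ with $A=S+T$ is exactly the right first move. But the heart of the proof is precisely the step you defer as ``the actual obstacle,'' and what resolves it is not a completed square: it is the operator monotonicity of the square root. Concretely, writing $A^{-1/2}T^{1/2}=(A^{-1/2}-I)T^{1/2}+T^{1/2}$ and applying your weighted Cauchy--Schwarz inequality gives
\begin{equation*}
A^{-1/2}TA^{-1/2}\;\le\;(1+c)\,(A^{-1/2}-I)T(A^{-1/2}-I)\;+\;(1+c^{-1})\,T,
\end{equation*}
and $T\le A$ yields $(A^{-1/2}-I)T(A^{-1/2}-I)\le (I-A^{1/2})^2=I-2A^{1/2}+A$. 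To finish one needs $I-2A^{1/2}+A\le (I-S)+T$, which after substituting $A=S+T$ is equivalent to $S\le A^{1/2}=(S+T)^{1/2}$. This is where the hypothesis $S\le I$ enters: it gives $S^2\le S\le S+T$, and then $S\le (S+T)^{1/2}$ follows from the fact that $x\mapsto\sqrt{x}$ is operator monotone (equivalently, $0\le B^2\le C^2$ with $B,C\ge 0$ implies $B\le C$). The coefficients then assemble as $(1+c)\bigl[(I-S)+T\bigr]+(1+c^{-1})T=(1+c)(I-S)+(2+c+c^{-1})T$.

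The gap, then, is twofold. First, you never exhibit the decomposition, and your stated toolkit --- $X^\dagger X\ge0$ plus the two cross-term bounds --- is not enough on its own: the step $S\le(S+T)^{1/2}$ is not a sum-of-squares identity in $(I-S)^{1/2}$, $T^{1/2}$ and $T^{1/2}A^{-1/2}$, and naive alternatives fail (for instance, the tempting route via $(I-A^{1/2})^2\le|I-A|\le(I-S)+T$ breaks down because $|P-Q|\le P+Q$ is false for noncommuting positive operators). Second, calling the remainder ``routine algebra once $X$ is pinned down'' understates what is needed: you must invoke L\"owner monotonicity of the square root (or prove the special case $B^2\le C^2\Rightarrow B\le C$ directly), which is a qualitatively different ingredient from the Cauchy--Schwarz manipulations you list. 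Your instinct that both hypotheses must be used and that noncommutativity is the source of the $2+c+c^{-1}$ is correct, but the proof is not complete without this identified mechanism.
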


\begin{proof}[Proof of Theorem~\ref{thm:main}|Achievability Part]
  Fix $\epsilon'>0$, $c>0$, and $P_X$. We are going to
  show that there exists a $(2^R,\epsilon)$-code such that
  \begin{equation*}
    \epsilon \le (1+c)\epsilon' +
     (2+c+c^{-1})2^{R-\DHy^{\epsilon'}(\pi^\Hilbert{AB}\|\pi^\Hilbert{A}\otimes
       \pi^\Hilbert{B})} \ ,
  \end{equation*}
  which immediately implies~\eqref{eq:lower}. 

  Let $Q$ be an operator acting on $\Hilbert{AB}$ such that $0\le Q
  \le I$ and $\trace\big[Q\pi^{\Hilbert{AB}}\big]\ge 1-\epsilon'$. By
  definition, it suffices to prove that there exists a codebook
  and a decoding POVM with error probability
   \begin{align}
     \epsilon \le (1+c)\epsilon' +
     (2+c+c^{-1})2^R\trace\big[Q(\pi^\Hilbert{A}\otimes
       \pi^\Hilbert{B})\big] \ .
     \label{eq:lower18}
   \end{align}

   We generate a codebook by choosing its codewords $x_j$ at random,
   each independently according to the distribution $P_X$.
   Furthermore, we define the corresponding decoding POVM by its
   elements,
  \begin{align*}
    E_i = \left(\sum_{j=1}^{2^R} A_{x_j} \right)^{-\frac{1}{2}}A_{x_i}
    \left(\sum_{j=1}^{2^R} A_{x_j} \right)^{-\frac{1}{2}} \ ,
  \end{align*}
  where $A_x \triangleq \textnormal{tr}_\Hilbert{A}\bigl[
  \left(|x\ket\bra x|^\Hilbert{A} \otimes I^\Hilbert{B} \right)
  Q\bigr]$.

  For a specific codebook $\{x_j\}$ and the transmitted codeword
  $x_i$, the probability of error is given by
  \begin{align*}
    \Pr(\textnormal{error}|x_i, \{x_j\}) = \trace[{(I-E_i)\rho_{x_i}}]
    \ .
  \end{align*}
  We now use Lemma \ref{lem:HayashiNagaoka} with $S = A_{x_i}$ and $T
  = \sum_{j \neq i} A_{x_j}$ to bound this by
  \begin{multline*}
    \Pr(\textnormal{error}|x_i, \{x_j\})  \le 
    (1+c)\bigl(1-\trace[A_{x_i}\rho_{x_i}]\bigr)\\
    + (2+c+c^{-1})\sum_{j\neq i}
    \trace[A_{x_j} \rho_{x_i}] .
  \end{multline*}
  Averaging over all codebooks, but keeping the transmitted codeword
  $x_i$ fixed, we find
   \begin{multline*}
     \lefteqn{\Pr(\textnormal{error}|x_i) \le (1+c)
     \bigl(1-\trace[A_{x_i}\rho_{x_i}]\bigr)}\\
     +   (2+c+c^{-1})(2^R-1)\trace \Big[\sum_{x'\in\set{X}} P_X(x')
         A_{x'}\rho_{x_i} \Big].
   \end{multline*}
   Averaging now in addition over the transmitted codeword $x_i$, we
   obtain the upper bound
   \begin{multline}
     \Pr(\textnormal{error})  \le (1+c)\bigl(1-\sum_x P_X(x)
       \trace[A_x\rho_x]\bigr) \\
      + (2+c+c^{-1})2^R 
     \trace\Big[\sum_{x'}
         P_X(x') A_{x'} \sum_x P_X(x) \rho_x \Big] .
     \label{eq:lower_1}
   \end{multline}
   Note that
   \begin{multline*}
      \sum_x P_X(x) \trace[A_x \rho_x]
   = 
      \sum_x P_X(x) \trace\Big[Q |x\ket\bra x|^\Hilbert{A} \otimes
        \rho_x^\Hilbert{B} \Big] \\
   =
     \trace\big[Q\pi^\Hilbert{AB}\big]
   \geq 
     1-\epsilon'
   \end{multline*}
   and 
   \begin{multline*}
     \trace{\Big[\sum_{x} P_X(x') A_{x'}\sum_x P_X(x) \rho_x\Big]} \\
    = 
    \sum_{x'}P_X(x') \trace \Big[Q |x'\ket\bra x'| \otimes 
       \sum_x P_X(x) \rho_x\Big]\\
    =
     \trace\big[Q(\pi^\Hilbert{A}\otimes
         \pi^\Hilbert{B})\big] .
   \end{multline*}
   Inserting these expressions into~\eqref{eq:lower_1} we find that
   the upper bound \eqref{eq:lower18} holds for the probability of
   error averaged over the class of codebooks we generated.  Thus
   there must exist at least one codebook whose error probability
   $\epsilon$ satisfies \eqref{eq:lower18}. %This concludes the proof.
\end{proof}

%%%%%%%%%%%%%%%%%%%%%%%%%%%%%%%%%%%%%%%%%%%%%%%%%%

\emph{Asymptotic Analysis.|}Theorem~\ref{thm:main} applies to the
transmission of a message in a single use of the channel. Obviously, a
channel that can be used $n$ times can always be modeled as one big
single-use channel. We can thus retrieve the known expressions for the
(usual) capacity of channels, i.e., the average number of bits that
can be transmitted per channel use in the limit where the channel is
used arbitrarily often and the error $\epsilon$ approaches $0$. Most
generally, a channel that can be used an arbitrary number of times is
characterized by a sequence of mappings 
$x_n\mapsto \rho^n$, $n\in\{1,2,\ldots\}$, where $x_n \in \set{X}_n$
represents an input 
state over $n$ channel-uses \footnote{Here $\set{X}_n$ cannot be
  replaced by $\set{X}^{\times n}$, which only represents
  \emph{product} states.}, and where $\rho^n$ is a density operator on
$\Hilbert{B}^{\otimes n}$. Note that such a channel need not have any
structure such as ``causality'' as defined in
\cite{kretschmannwerner05}. From 
Theorem \ref{thm:main} it immediately follows that the capacity of any
channel is given by
\begin{align} \label{eq:general}
  C = \lim_{\epsilon\downarrow 0} \varliminf_{n\to \infty} \frac{1}{n} \sup_{P_{X_n}}
  \DHy^\epsilon(\pi^{\Hilbert{A}_n\otimes \Hilbert{B}^{\otimes n}}\|
  \pi^{\Hilbert{A}_n} \otimes \pi^{\Hilbert{B}^{\otimes n}}),
\end{align}
where $\Hilbert{A}_n$ denotes the Hilbert space spanned by orthonormal
states $|x_n\ket$ for all $x_n\in\set{X}_n$. This expression is
equivalent to 
\cite[Theorem~1]{hayashinagaoka03}~\footnote{While the expressions
  used in~\cite{hayashinagaoka03} look completely different, one can
  (non-operationally) prove their equivalence. See
  \cite{WLG11}.}. We
can also derive similar results for the \emph{optimistic capacity} and
the \emph{$\epsilon$-capacity}, see~\cite{WLG11}.

Now consider a \emph{memoryless} channel whose
behavior in each use is independent of the previous uses. %, i.e., $n$
%channels act on an $n$-tuple of inputs as $(x_1, \ldots, x_n) \mapsto
%\rho_{x_1} \otimes \cdots \otimes \rho_{x_n}$. 
The capacity $C$ of such a channel is given by the well-known
Holevo-Schumacher-Westmoreland Theorem
\cite{holevo98,schumacherwestmoreland97}:
\begin{align} \label{eq:HolSchWes}
  C =  \lim_{k \to \infty} \frac{1}{k} \sup_{P_{X_k}}
  D(\pi^{\Hilbert{A}_k\otimes\Hilbert{B}^{\otimes
      k}}\|\pi^{\Hilbert{A}_k}\otimes 
  \pi^{\Hilbert{B}^{\otimes k}})  \ .
\end{align}
Note that $D(\pi^{\Hilbert{A}_k\otimes\Hilbert{B}^{\otimes
    k}}\|\pi^{\Hilbert{A}_k}\otimes
\pi^{\Hilbert{B}^{\otimes k}}) $ may equivalently be written as a
mutual information $I(\Hilbert{A}_k;\Hilbert{B}^{\otimes
  k})$. This theorem can be proved easily using
\eqref{eq:general}. 

\begin{proof}[Proof of \eqref{eq:HolSchWes}]
  To show achievability, i.e., that $C$ is lower-bounded by the
  RHS of \eqref{eq:HolSchWes}, we restrict the
  supremum in \eqref{eq:general} to product distributions on $k$-use
  states, so the joint state $\pi^{\Hilbert{A}_n\otimes
    \Hilbert{B}^{\otimes n}}$ looks like $(\pi^{\Hilbert{A}_k\otimes
    \Hilbert{B}^{\otimes k}})^{\otimes ( n/k )}$~\footnote{As $n$
    tends to infinity, the problem that $n$ might not be divisible
    by $k$ becomes negligible.}. We then let $n$ tend to infinity and
  apply Lemma~\ref{lem:stein} to obtain that, for any $k$,
  \begin{align} \label{eq:k}
    C \ge \frac{1}{k}
    \sup_{P_{X_k}} D(\pi^{\Hilbert{A}_k\otimes
        \Hilbert{B}^{\otimes k}} \|
      \pi^{\Hilbert{A}_k}\otimes \pi^{\Hilbert{B}^{\otimes
          k}}). 
  \end{align}
  This concludes the proof of the achievability part. 

  The converse, i.e., that $C$ is upper-bounded by the RHS of
  \eqref{eq:HolSchWes}, follows immediately from \eqref{eq:general} and
  \eqref{eq:upper_by_D}. 
\end{proof}

To conclude, it may be interesting to compare Theorem~\ref{thm:main}
to other recently derived bounds on the one-shot capacity of
classical-quantum channels~\cite{mosonyidatta09,mosonyihiai11,renesrenner11}. The
bounds of~\cite{mosonyidatta09,mosonyihiai11} are different from ours in that they
are not known to coincide asymptotically for arbitrary channels.
In~\cite{renesrenner11}, it has been shown that the one-shot
classical-quantum capacity $R$ of a channel can be approximated (up to
additive terms of the order $\log_2 1/\epsilon$) by
\begin{align*} 
   R \approx \max_{P_X}  H_{\min}^\epsilon(A)_{\pi^{\Hilbert{A} }}  - H_{\max}^\epsilon(A|B)_{\pi^{\Hilbert{A} \Hilbert{B}}} \ ,
\end{align*}
where $H_{\min}^\epsilon$ and $H_{\max}^\epsilon$ denote the smooth
min- and max-entropies, which have recently been shown to be the
relevant quantities for characterizing a number of
information-theoretic tasks (see, e.g.,
\cite{tomamichelcolbeckrenner09} for definitions and
properties). Combined with our result, this suggests that there is a
deeper and more general relation between hypothesis testing and smooth
entropies (and, therefore, the associated operational
quantities). Exploring this link is left as an open question
for future work.

\emph{Acknowledgments.|}The authors thank Masahito Hayashi and Marco Tomamichel for their comments on an earlier version of this Letter. L.W. acknowledges support from the US Air
Force Office of Scientific Research (grant No.  FA9550-11-1-0183) and
the National Science Foundation (grant No. CCF-1017772). R.R.
acknowledges support from the Swiss National Science Foundation (grant
Nos.\ 200021-119868, 200020-135048, and the NCCR ``QSIT'') and the
European Research 
Council (ERC) (grant No.\ 258932).

%%%%%%%%%%%%%%%%%%%%%%%%%%%%%%%%%%%%%%%%%%%%%%%%%%

%

\end{document}